\pgfplotsset{compat=1.18}
\newtheorem{theorem}{Theorem}
\newtheorem{definition}{Definition}
\newtheorem{lemma}{Lemma}
\newtheorem{prop}{Proposition}
\newtheorem{fact}{Fact}
\newtheorem{corollary}{Corollary}
\newcommand{\agind}[1][i]{_{#1}}
\newcommand{\ironed}{\bar}
\newcommand{\constrained}{\hat}
\newcommand{\optconstrained}{\composed{\optimized}{\constrained}}
\newcommand{\optimized}[1]{#1\opt}
\newcommand{\differentiated}[1]{#1'}
\newcommand{\tagged}[2]{{#2}^{#1}}
\newcommand{\primedarg}[1]{#1\primed}
\newcommand{\noaccents}[1]{#1}
\newcommand{\composed}[3]{#1{#2{#3}}}
\newcommand{\newagentvar}[3][\noaccents]{%
\expandafter\newcommand\expandafter{\csname #2\endcsname}{#1{#3}}%
\expandafter\newcommand\expandafter{\csname #2s\endcsname}{#1{\boldsymbol{#3}}}%
\expandafter\newcommand\expandafter{\csname #2smi\endcsname}[1][i]{#1{\boldsymbol{#3}}_{-##1}}%
\expandafter\newcommand\expandafter{\csname #2i\endcsname}[1][i]{#1{#3}\agind[##1]}%
\expandafter\newcommand\expandafter{\csname #2ith\endcsname}[1][i]{#1{#3}_{(##1)}}%
}
\newcommand{\newitemvar}[3][\noaccents]{%
\expandafter\newcommand\expandafter{\csname #2\endcsname}{#1{#3}}%
\expandafter\newcommand\expandafter{\csname #2s\endcsname}{#1{\boldsymbol{#3}}}%
\expandafter\newcommand\expandafter{\csname #2smj\endcsname}[1][j]{#1{\boldsymbol{#3}}_{-##1}}%
\expandafter\newcommand\expandafter{\csname #2j\endcsname}[1][j]{#1{#3}_{##1}}%
\expandafter\newcommand\expandafter{\csname #2jth\endcsname}[1][j]{#1{#3}_{(##1)}}%
}
\newcommand{\forrezs}[1]{{#1}^{\rezs}}
\newcommand{\exposted}[1]{#1^{\text{\it EP}}}
\composed{\exposted}{\constrained}]{excalloc}{\qalloc}
\newcommand{\served}[1]{#1^1}
\newcommand{\nonserved}[1]{#1^0}
\newcommand{\alloced}[1]{#1^{\alloc}}
\newcommand{\allocedi}[1]{#1^{\alloci}}
\newcommand{\game}{\mathcal{G}}
\DeclareMathOperator{\OPT}{OPT}
\newcommand{\algo}{A}
\newcommand{\reals}{{\mathbb R}}
\newcommand{\opt}{^{\star}}
\newcommand{\primed}{^\dagger}
\newcommand{\benchmark}{B}
\newcommand{\bmk}{\benchmark}
\newcommand{\expecta}{\mathbf{E}}
\newcommand{\scF}{\mathcal{F}}
\newcommand{\despace}{\Spaceomi[1]}
\newcommand{\deact}{\spaceomi[1]}
\newcommand{\deactbar}{\spaceombari[1]}
\newcommand{\decost}{\minci[1]}
\newcommand{\stspace}{\Spaceomi[2]}
\newcommand{\stact}{\spaceomi[2]}
\newcommand{\simplex}{\Delta}
\newcommand{\stcost}{\minci[2]}
\begin{document}



\title{Extension of Minimax for Algorithmic Lower Bounds}

\newcommand{\email}[1]{\href{mailto:#1}{#1}}

\author{Jason Hartline\thanks{Northwestern U., Evanston, IL.  Work done in part while supported by NSF CCF 1618502.\newline Email: \email{hartline@northwestern.edu}} \and Aleck Johnsen\thanks{Northwestern U., Evanston, IL.  Work done in part while supported by NSF CCF 1618502.\newline Email: \email{aleckjohnsen@u.northwestern.edu}}}



\maketitle
\begin{abstract}
This paper considers the use of Yao's Minimax Theorem in robust algorithm design, e.g., for online algorithms, where the algorithm designer aims to minimize the ratio of the algorithm's performance to the optimal performance.  When applying Minimax, the objective becomes the expectation of these ratios.  The main result of the paper is that it is equally valid for the objective (after applying Minimax) to be the ratio of the expectations.
\end{abstract}

\section{Overview}
\label{s:intro}

This paper presents an extension of Yao's Minimax Theorem \citep{yao-77} that is a pertinent and potent analytical tool within computer science theory.  The main application of this paper relates (quite generally) to the measurement of ``robustness" within robust algorithm design, e.g., for the standard model of competitive analysis in online algorithms.

The general model for Minimax is worst-case minimization which is a two-stage optimization.  A fixed cost (objective) function $\minc$ takes two kinds of inputs: (1) designer variables (i.e., cost parameters) and (2) descriptions of state.  A designer assigns variables and evaluates cost subject to a requirement that state is deterministic to maximize cost (given the designer's fixed assignments).  Yao's Minimax Theorem (see \Cref{thm:yaominmax}) says that a lower bound on the designer's optimal cost can be analyzed by choosing {\em any fixed randomization} over state variables and letting the designer best respond to minimize the cost function {\em in expectation} over the mixed state.  Minimax is a powerful technique because for many optimization problems, there exists a choice of mixed-state that leads to description of the tight lower bound.

A challenge arises within application of Minimax for ratio cost functions, i.e., costs $\minc (\cdot)\vcentcolon= \sfrac{\bmk(\cdot)}{\algo(\cdot)}$.  Technically, the problem is that central to the lower bound analysis is an objective function taking expectation of cost over states in a way that the ratio is {\em inside} the expectation: $\expecta\left[\minc(\cdot)\right]=\expecta\left[\sfrac{\bmk(\cdot)}{\algo(\cdot)} \right]$, which we call {\em expectation-over-ratios} (EOR).  Intuitively, the challenge is that weighted sums of ratios are difficult to manipulate and quantify algebraically (unless every denominator is the same, or every ratio is the same).  In particular, we generally can not analyze the expected cost by analyzing the functions $\bmk$ and $\algo$ independently.

By contrast, consider the similar -- but algebraically quite distinct -- quantity $\sfrac{\expecta\left[\bmk(\cdot)\right]}{\expecta\left[\algo(\cdot)\right]}$ (which uses the same distribution weights), which we call {\em ratio-of-expectations} (ROE).  This quantity is easy to manipulate (e.g., by multiplying through an equation in which it appears by $\expecta\left[ \algo(\cdot)\right]$).  And, this quantity is relatively easy to calculate, due to the independence of the expectations applied respectively to $\bmk$ and to $\algo$, with only a single ratio to consider afterwards.

{\em The main result of this paper shows that we can write Minimax-style lower bounds as ratio-of-expectations rather than the prevailing form of expectation-over-ratios} (see \Cref{lem:mixedbenchmark}).

To prove the main result, the key technique is an initial step to change the objective function to expectation-over-ratios, {\em before applying Minimax} (see \Cref{lem:mixedstate}).

As stated above, in fact the original expectations-over-ratios structure can be easy to manipulate {\em if every ratio is the same}.  To explicitly illustrate the analytical strength of this constant-ratio property, we re-prove a version of the main result: in a weaker setting, i.e., stronger assumptions; albeit, with a stronger result: programs using the original ratio objective, EOR, and ROE {\em all have equal optimal value} (see \Cref{a:mainresultminmax}).

\paragraph{Illustration of the Main Result}

To illustrate both the key challenge of ratio cost functions as transformed to EOR under standard Minimax, and the simplification gained by writing lower bounds rather as ROE, consider broadly the field of online algorithms and its canonical framework to measure robustness: competitive analysis.

For an online algorithms problem, an algorithm designer will observe a sequence $\spaceom\in\Spaceom$ of state-elements.  For each element of the sequence, the algorithm must commit to an action with knowledge of past state but before seeing the current or future states. 
 Finally, given the algorithm's actions and the realized state-sequence as inputs, an algorithm $\algo$ is awarded a positive utility $\algo(\spaceom)$.  (Note, maximizing this utility -- in its independent form -- is not the ultimate objective function for the designer; see below.)

Competitive analysis defines a robust objective for algorithm design.  For each sequence $\spaceom$ of states, it calculates the utility of the optimal algorithm that knows the full sequence in advance, a.k.a., the {\em offline optimal} algorithm $\OPT_{\spaceom}$ which has positive utility $\OPT_{\spaceom}(\spaceom)$.  For each sequence $\spaceom$, it sets a benchmark $\bmk(\spaceom)=\OPT_{\spaceom}(\spaceom)$ and critically, {\em defines a cost function for use within worst-case minimization} by $\minc(\spaceom)\vcentcolon=\sfrac{\OPT_{\spaceom}(\spaceom)}{\algo(\spaceom)}$.

To summarize competitive analysis: a pre-committed online algorithm is designed with the objective to minimize the cost $\sfrac{\OPT_{\spaceom}(\spaceom)}{\algo(\spaceom)}$ for the $\spaceom\in\Spaceom$ inducing the worst (largest-cost) ratio, i.e., it identifies an algorithm's worst-case multiplicative-approximation of the offline-optimal benchmark.

Applying standard Minimax, the objective function (used for lower bounds) has the form $\expecta\left[\sfrac{\OPT_{\spaceom}(\spaceom)}{\algo(\spaceom)}\right]$.  In this case, analysis depends pointwise on the utilities $\OPT_{\spaceom}(\spaceom)$.  The best-response algorithm must approximate these $\spaceom$-state benchmarks individually, showing preference towards minimizing the ratios of certain states based on both the absolute value of the induced benchmark and the density assigned to the state by the distribution.  Frequently, the calculation of a best-response algorithm's per-state utility tradeoffs is prohibitively complex.

By contrast, consider an objective function (for lower bounds) with the form $\sfrac{\expecta\left[\OPT_{\spaceom}(\spaceom)\right]}{\expecta\left[\algo(\spaceom)\right]}$ (the validity of which we will prove in our main result).  The reduction of complexity is apparent.  Critically, the dependence on ratios between each state's optimal-algorithm utility and designer-algorithm utility is broken.  Instead, there is a single benchmark-quantity in the numerator which is the expectation over the individual-state benchmarks.  And correspondingly, there is a single quantity in the denominator which is the algorithm's expected utility over states (for which the fixed-algorithm does not know the realized state, distinguishing this quantity from the numerator).  In conclusion for a given mixture over sequences $\spaceom$: the numerator is a fully-determined constant; and the optimal algorithm is simply, independently the best response to the mixture over states.

\paragraph{Related Work} \cite{ABB-23} give a model for robust algorithm design that unifies the regret and ratio objectives, using themes that overlap this paper.  Their general objective considers for a given mixture over states, the additive difference between two weighted terms: expectation-over-benchmarks and expectation-over-algorithm-utilities.  These objective functions -- in conjunction with the respective, realized optimal values of their resulting Minimax programs -- can be algebraically rearranged to give a generalized model of our ratio-of-expectations objective (where the generalization is due to their weight parameter).

\section{Formal Minimax Preliminaries}
\label{s:setup}

Denote a Hilbert space by $\Spaceom$.\footnote{\label{foot:hilbert} An informal definition of a Hilbert space is: it is the generalization of the structure of a discretely-dimensioned space to arbitrary discrete and/or continuous dimensions, including generalization of discrete probability measures; e.g., the structure of a linear program in discrete variables/constraints theoretically generalizes to be continuous.}  Given arbitrary $\Spaceom$, denote the set of all possible distributions by $\simplex(\Spaceom)$ -- i.e., the {\em probability simplex}.  Denote a distribution over elements $\spaceom\in\Spaceom$ by $\abstrd\in\simplex(\Spaceom)$.\footnote{\label{foot:pointmass} In this paper, we allow distributions to include point masses with strictly positive measure as Dirac functions.}



Worst-case minimization of a cost function $\minc$ is parameterized by a partition of the domain-variables of $\minc$ into two sets (i.e. spaces, which have known support): (1) optimization variables chosen by the {\em designer}; and (2) dimensions of state over which the worst-case value of $\minc$ is analyzed. 
 Respectively, denote the design space by $\despace$ and the state space by $\stspace$.  Thus, the cost function is formally defined by $\minc:\despace \times \stspace \rightarrow \reals_+$ (with an assumption that costs are non-negative).\footnote{\label{foot:internalrand} There is a generalization of this cost function which allows the designer to use {\em internal randomization} over its original space $\despace$.  This generalization is material for some applications but is not necessary for the presentation of this paper.}
 
 
 

Worst-case minimization is represented mathematically as a successive $\min$-$\max$ analysis and is thus called {\em Minimax}:

\begin{equation}
\label{eqn:minmaxpure}
\min_{\deact\in\despace}~\max_{\stact\in\stspace}\minc(\deact,\stact)
\end{equation}

\noindent Most generally -- to account for design/state spaces that are not compact which may require analysis in-the-limit -- Minimax is represented as $\inf$-$\sup$:
\begin{equation}
\label{eqn:infsuppure}
\inf_{\deact\in\despace}~\sup_{\stact\in\stspace}\minc(\deact,\stact)
\end{equation}

\subsection{Theoretical Lower Bounds from Minimax}
\label{s:yaoandid}
\label{s:pitechnique}
\label{s:adversary}

The formulation of equation~\eqref{eqn:infsuppure} illustrates the natural interpretation of Minimax as a two-player, zero-sum game between the designer whose objective is to minimize (in worst-case) $\decost(\cdot)=\minc(\deact,\stact)$ and an {\em adversary} whose objective is to minimize $\stcost(\cdot)=-\minc(\deact,\stact)$.\footnote{\label{foot:zerosum} {\em Zero-sum} games satisfy: $\forall~ (\deact,\stact)$, sum utility of both players is identically $\decost(\deact,\stact)+\stcost(\deact,\stact)=0$.}  Moreover, {\em first} the designer explicitly commits to an {\em action} $\deact\in\despace$, and then {\em second} the adversary best-responds with an action $\stact\in\stspace$ that is worst-case (for the designer).

Yao's Minimax Principle (\Cref{thm:yaominmax}) states that the order of Minimax actions may be reversed to $\sup$-$\inf$ with two stipulations: (a) the adversary (now acting first) may randomize its action; and (b) in the case that originally $\inf$-$\sup$ is strictly necessary (because $\min$-$\max$ is not well-defined), the optimal value of $\sup$-$\inf$ may be a non-tight lower bound.

\begin{theorem}[\citealp{yao-77}]
\label{thm:yaominmax}
{\em [Yao's Minimax Principle]}  Given a $2$-player zero-sum game $\game$ in which sequentially player $1$ chooses an action $\deact\in\despace$, then player $2$ chooses action $\stact\in\stspace$.  Given the players are cost minimizers and the cost functions on pure actions are (any real-valued function) $\minci[1](\deact,\stact)\geq 0$ and $\minci[2]=-\minci[1]$.  Then the {\em value} of game $\game$ (the left-hand side) satisfies:
\begin{align}
\label{eqn:yaochain}
    \inf_{\deact\in\despace}~\sup_{\stact\in\stspace}\minci[1](\deact,\stact)&\geq \sup_{\abstrdi[2]\in\simplex(\stspace)}~\inf_{\deact\in\despace}\expecta_{\stact\sim\abstrdi[2]}\left[\minci[1](\deact,\stact)\right]\\
    \nonumber
    &\geq \inf_{\deact\in\despace}\expecta_{\stact\sim\abstrdbari[2]}\left[\minci[1](\deact,\abstrdbari[2])\right],~~\forall~\text{\em fixed}~\abstrdbari[2]\in\simplex(\stspace)
\end{align}
\end{theorem}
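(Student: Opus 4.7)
The plan is to prove the two inequalities of the chain separately; both amount to elementary bookkeeping of $\inf$ and $\sup$, and no strong-duality machinery (compactness, convexity, Sion's theorem) is required. The statement is only the weak-duality direction of Minimax: the $\inf$-$\sup$ value dominates the $\sup$-$\inf$ value, and the $\sup$-$\inf$ value dominates any single mixed-adversary plug-in.

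For the first inequality, I would begin by fixing an arbitrary mixed adversary strategy $\abstrdi[2] \in \simplex(\stspace)$. The key pointwise observation is that for every $\deact \in \despace$,
\[
\expecta_{\stact \sim \abstrdi[2]}\!\left[\minci[1](\deact, \stact)\right] \;\leq\; \sup_{\stact \in \stspace} \minci[1](\deact, \stact),
\]
because the expectation of any (non-negative) function under any probability measure is dominated by its pointwise supremum. Taking $\inf$ over $\deact$ on both sides preserves the inequality; the resulting right-hand side does not depend on $\abstrdi[2]$, so I can then take $\sup$ over $\abstrdi[2]$ on the left to obtain
\[
\sup_{\abstrdi[2] \in \simplex(\stspace)} \inf_{\deact \in \despace} \expecta_{\stact \sim \abstrdi[2]}\!\left[\minci[1](\deact, \stact)\right] \;\leq\; \inf_{\deact \in \despace} \sup_{\stact \in \stspace} \minci[1](\deact, \stact),
\]
which is the first inequality.

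For the second inequality, I would simply observe that the fixed $\abstrdbari[2]$ is one particular element of the simplex $\simplex(\stspace)$. By the definition of $\sup$, restricting the supremum to this one point can only weaken the quantity, yielding the stated inequality in a single line.

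The main ``obstacle'' is really a discipline issue: one must never attempt to swap $\inf$ and $\sup$ in the opposite direction during the argument, since that would invoke genuine minimax equality (requiring topological or convexity hypotheses on $\despace$, $\stspace$, and $\minci[1]$). A minor technical point worth noting, given that $\Spaceom$ is an arbitrary Hilbert space and distributions may include point masses, is that the expectation $\expecta_{\stact \sim \abstrdi[2]}[\minci[1](\deact, \stact)]$ is well-defined in $[0, +\infty]$ by virtue of the assumption $\minci[1] \geq 0$; the pointwise bound by $\sup_{\stact} \minci[1](\deact, \stact)$ holds even when either side is $+\infty$, so the proof carries through without integrability caveats.
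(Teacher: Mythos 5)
Your proof is correct. Note, however, that the paper does not prove this statement at all: it is quoted as a known result with a citation to Yao (1977), so there is no in-paper argument to compare against. What you have written is the standard weak-duality derivation --- the pointwise bound $\expecta_{\stact\sim\abstrdi[2]}\left[\minci[1](\deact,\stact)\right]\leq\sup_{\stact\in\stspace}\minci[1](\deact,\stact)$ followed by monotonicity of $\inf$ and then a $\sup$ over $\abstrdi[2]$ of a $\abstrdi[2]$-free right-hand side, plus the trivial restriction-to-a-point step for the second inequality --- and it establishes exactly the one-directional chain the theorem asserts. Your closing remarks are also the right ones: the statement deliberately avoids the equality direction (which would need compactness/convexity hypotheses), and nonnegativity of $\minci[1]$ makes the expectation well-defined in $[0,+\infty]$ so the pointwise bound survives without integrability caveats.
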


\paragraph{Lower Bounds for Ratio Objectives}

Consider the special case of the game description in \Cref{thm:yaominmax} in which the cost function is a ratio.  For a fixed input $(\deact,\stact)$ and two functions $\algo:\despace \times \stspace \rightarrow \reals_+$ and $\bmk:\despace \times \stspace \rightarrow \reals_{0+}$ define:\footnote{\label{foot:denomgzero} {Of course this definition can always be trivially assigned by setting $\algo(\deact,\stact)=1$, but our interest is in cases where there is an economically-meaningful multiplicative separation of $\minc$, e.g., robust algorithm design.
Moreover, we will restrict the functional definition of $\algo(\cdot)$ to require that: there exists $\deact\in\despace$ such that $\algo(\deact,\stact)>0$ for all $\stact$.  Thus, cost is guaranteed to be finite, which is consistent with our original definition $\minc:\despace \times \stspace \rightarrow \reals_+$.}}
\begin{equation}
\label{eqn:costratio}
    \minc(\spaceomi[1],\spaceomi[2])\vcentcolon= \frac{\bmk(\spaceomi[1],\spaceomi[2])}{\algo(\spaceomi[1],\spaceomi[2])}<\infty
\end{equation}

\noindent We exhibit Yao's Minimax Principle (\Cref{thm:yaominmax}) applied to a ratio cost function, including explicit substitution of the objectives in equation~\eqref{eqn:yaochain} to show it as an \textbf{expectation-over-ratios}:

\begin{align}
    \label{eqn:yaoeor}
\inf_{\deact\in\despace}~\sup_{\stact\in\stspace}\left[\frac{\bmk(\spaceomi[1],\spaceomi[2])}{\algo(\spaceomi[1],\spaceomi[2])}\right]
&\geq \sup_{\abstrdi[2]\in\simplex(\stspace)}~\inf_{\deact\in\despace}\left[\expecta_{\stact\sim\abstrdi[2]}\left[\frac{\bmk(\deact,\stact)}{\algo(\deact,\stact)}\right]\right]\\
\nonumber
& \geq \inf_{\deact\in\despace}\left[\expecta_{\stact\sim\abstrdbari[2]}\left[\frac{\bmk(\deact,\stact)}{\algo(\deact,\stact)}\right]\right],~~\forall~\text{fixed}~\abstrdbari[2]\in\simplex(\stspace)
\end{align}

\noindent Obtaining lower bounds from EOR -- tight or otherwise -- is frequently challenging because it is difficult to algebraically simplify the expectation.  See \Cref{s:intro} for a brief discussion of this observation.

\subsection{Replacing Expectation-over-Ratios with Ratio-of-Expectations}
\label{s:maintheorem}

The goal of this section is to state and prove the main result of this paper: within the central equation of Yao's Minimax Principle applied to a ratio cost objective, we may replace expectation-over-ratios with ratio-of-expectations.  We give this result below in \Cref{lem:mixedbenchmark}.

First, we identify the technical innovation that will be employed in the proof of \Cref{lem:mixedbenchmark}, which we state and prove as \Cref{lem:mixedstate}.

\begin{lemma}[The Randomized-state Relaxation Lemma]
\label{lem:mixedstate}
Consider a worst-case minimization problem setting with design space $\despace$ and state space $\stspace$.  Without loss of generality for the value of designer optimization, we can simultaneously: relax the state space to allow any mixture over states, and transition EOR to ROE (while otherwise maintaining the original $\inf$-$\sup$ framework). 
 I.e,, the following $\inf$-$\sup$ programs necessarily have the same value:
\begin{equation}
\label{eqn:mixedstatelemma}
\inf_{\deact\in\despace}~\sup_{\stact\in\stspace}\left[\frac{\bmk(\spaceomi[1],\spaceomi[2])}{\algo(\spaceomi[1],\spaceomi[2])}\right]
=\inf_{\deact\in\despace}~\sup_{\abstrdi[2]\in\simplex(\stspace)} \left[\frac{\expecta_{\stact\sim\abstrdi[2]}\left[\bmk(\deact,\stact)\right]}{\expecta_{\stact\sim\abstrdi[2]}\left[\algo(\deact,\stact)\right]}\right]
\end{equation}
\end{lemma}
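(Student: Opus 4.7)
}
My plan is to observe that both sides of equation~\eqref{eqn:mixedstatelemma} share the identical outer operator $\inf_{\deact\in\despace}$, so it suffices to fix an arbitrary $\deact\in\despace$ and prove the pointwise identity
\[
\sup_{\stact\in\stspace}\frac{\bmk(\deact,\stact)}{\algo(\deact,\stact)}
\;=\;
\sup_{\abstrdi[2]\in\simplex(\stspace)}\frac{\expecta_{\stact\sim\abstrdi[2]}[\bmk(\deact,\stact)]}{\expecta_{\stact\sim\abstrdi[2]}[\algo(\deact,\stact)]}.
\]
Once this equality is established for each $\deact$, taking $\inf_\deact$ on both sides yields the lemma. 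To simplify notation within the argument, fix $\deact$ and abbreviate $b(\stact)\vcentcolon=\bmk(\deact,\stact)$, $a(\stact)\vcentcolon=\algo(\deact,\stact)$, and $M\vcentcolon=\sup_{\stact}b(\stact)/a(\stact)$.

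The ``$\leq$'' direction is essentially for free: the set $\simplex(\stspace)$ includes all Dirac point masses, and evaluating the ROE expression at a point mass $\delta_{\stact}$ recovers the pure-state ratio $b(\stact)/a(\stact)$. Thus the RHS sup is at least the LHS sup by restriction to point-mass distributions (as explicitly licensed by the paper's treatment of point masses in footnote~\ref{foot:pointmass}).

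For the ``$\geq$'' direction, I would use the standard extreme-ratio inequality. By definition of $M$, the pointwise inequality $b(\stact)\leq M\cdot a(\stact)$ holds for every $\stact\in\stspace$. Taking expectations under any $\abstrdi[2]\in\simplex(\stspace)$ preserves the inequality, giving $\expecta_{\stact\sim\abstrdi[2]}[b(\stact)]\leq M\cdot\expecta_{\stact\sim\abstrdi[2]}[a(\stact)]$. Whenever the denominator is strictly positive (which I can always arrange when the designer's $\deact$ is relevant, by the restriction in footnote~\ref{foot:denomgzero}), dividing through yields that the ROE is bounded above by $M$, so taking the sup over $\abstrdi[2]$ gives the RHS $\leq M$ as required.

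The main subtlety, and the only step requiring care, is the handling of degenerate denominators: if the chosen $\deact$ admits some $\stact$ with $\algo(\deact,\stact)=0$, then both sides should be read as $+\infty$ consistently (the LHS sup is $+\infty$ when the numerator is nonzero at that $\stact$, and the RHS can approach $+\infty$ along a sequence of mixtures concentrating on $\stact$). Because the outer $\inf_\deact$ will discard such $\deact$ whenever a finite-cost $\deact$ exists (guaranteed by footnote~\ref{foot:denomgzero}), this does not affect the value of either program. After this bookkeeping, the two chains of inequalities close and the lemma follows.
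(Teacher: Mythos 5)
Your proof is correct and follows the same overall decomposition as the paper's: fix $\deact$, reduce to equality of the inner $\sup$ programs, and get the ``$\leq$'' direction by restricting the right-hand $\sup$ to point-mass distributions. Where you differ is in the ``$\geq$'' direction. The paper routes this through \Cref{lemma:roedominance}, a standalone statement that for every mixture $\abstrdi[2]$ there exists a state in its support whose \emph{individual} ratio is at least the ROE value; that lemma is proved by the probabilistic method (\Cref{fact:elementversusexpect}) applied to $\expecta[\tau\cdot s(\omega)-\sigma\cdot t(\omega)]=0$. You instead prove the bound directly via the mediant-type inequality: $\bmk(\deact,\stact)\leq M\cdot\algo(\deact,\stact)$ pointwise, integrate, and divide. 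The two arguments are logically equivalent in conclusion but your route is more elementary --- it avoids the existence claim and the auxiliary appendix lemma entirely, at the cost of not producing the (occasionally useful) witness state $\omega_+$ that the paper's version exhibits. You are also more careful than the paper on one point: \Cref{lemma:roedominance} assumes both functions are strictly positive, whereas your closing paragraph explicitly handles $\deact$ for which $\algo(\deact,\stact)=0$ somewhere and argues such $\deact$ cannot affect the value of the outer $\inf$ given footnote~\ref{foot:denomgzero}. That bookkeeping is a genuine (if small) improvement in rigor over the published argument.
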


\begin{proof}
\noindent The right-hand program makes two simultaneous modifications to the original cost-ratio Minimax program on the left-hand side: (1) it relaxes the adversary's action space to allow mixed states, i.e., to allow $\abstrdi[2]\in\simplex(\stspace)$; and (2) it changes the objective from expectation-over-ratios (EOR) to ratio-of-expectations (ROE).  We prove the lemma by showing equality of the values of the inner maximization ($\sup$) programs (which is sufficient, because then the value of the adversary's response is equal for any (previously fixed) choice of $\deactbar\in\despace$.

Every $\stact\in\stspace$ on the left-hand side is available on the right-hand side as a pointmass distribution.  This proves:

\begin{equation*}
\sup_{\stact\in\stspace}\left[\frac{\bmk(\deactbar,\stact)}{\algo(\deactbar,\stact)}\right]
\leq\sup_{\abstrdi[2]\in\simplex(\stspace)} \left[\frac{\expecta_{\stact\sim\abstrdi[2]}\left[\bmk(\deactbar,\stact)\right]}{\expecta_{\stact\sim\abstrdi[2]}\left[\algo(\deactbar,\stact)\right]}\right],~~\forall~\text{fixed}~\deactbar\in\despace
\end{equation*}

\noindent \Cref{lemma:roedominance} (in \Cref{a:monopropsoffractions}) shows that for every distribution $\abstrdbari[2]\in\simplex(\stspace)$, there exists an element of $\abstrdbari[2]$ for which the element's individual ratio (i.e., a lower bound on the value of the right-hand side) is at least the value of the left-hand side (with its ratio-of-expectations cost function).  This proves:

\begin{equation*}
\sup_{\stact\in\stspace}\left[\frac{\bmk(\deactbar,\stact)}{\algo(\deactbar,\stact)}\right]
\geq\sup_{\abstrdi[2]\in\simplex(\stspace)} \left[\frac{\expecta_{\stact\sim\abstrdi[2]}\left[\bmk(\deactbar,\stact)\right]}{\expecta_{\stact\sim\abstrdi[2]}\left[\algo(\deactbar,\stact)\right]}\right],~~\forall~\text{fixed}~\deactbar\in\despace\qedhere
\end{equation*}
\end{proof}

\noindent Intuitively, the key reason that \Cref{lem:mixedstate} holds with equality is: the adversary already acts second such that relaxing to allow distributions over elements of $\stspace$ gives no extra advantage to the adversary.  Explained differently: worst-case mixtures can not be ``more powerful" than the worst element within the mixture.  The technical contribution of \Cref{lem:mixedstate} towards our goal is quite strong: note that we successfully manipulated -- {\em in advance of applying Minimax} -- the objective function to exhibit the structure of \textbf{ratio-of-expectations}.

We give our main result next as \Cref{lem:mixedbenchmark}.  However as a summary of the proof, the statement now follows immediately from standard application of Minimax to the program resulting from \Cref{lem:mixedstate}.

\begin{lemma}[The Ratio-of-Expectations Minimax Lemma]
\label{lem:mixedbenchmark}
Consider a worst-case minimization problem setting with design space $\despace$ and state space $\stspace$.  Let $\abstrdbari[2]$ be any fixed mixture over states. Then

\begin{align}
\label{eqn:mixedbenchmarklemma}
\inf_{\deact\in\despace}~\sup_{\stact\in\stspace}\left[\frac{\bmk(\spaceomi[1],\spaceomi[2])}{\algo(\spaceomi[1],\spaceomi[2])}\right]
&\geq \sup_{\abstrdi[2]\in\simplex(\stspace)}~\inf_{\deact\in\despace}\left[\frac{\expecta_{\stact\sim\abstrdi[2]}\left[\bmk(\deact,\stact)\right]}{\expecta_{\stact\sim\abstrdi[2]}\left[\algo(\deact,\stact)\right]}\right]\\
\nonumber
& \geq \inf_{\deact\in\despace}\left[\frac{\expecta_{\stact\sim\abstrdbari[2]}\left[\bmk(\deact,\stact)\right]}{\expecta_{\stact\sim\abstrdbari[2]}\left[\algo(\deact,\stact)\right]}\right],~~\forall~\text{\em fixed}~\abstrdbari[2]\in\simplex(\stspace)
\end{align}
\end{lemma}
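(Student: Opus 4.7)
The plan is to derive \Cref{lem:mixedbenchmark} as a nearly immediate consequence of \Cref{lem:mixedstate} combined with the standard max-min (weak duality) inequality, with the final step being a trivial substitution. The heavy lifting has already been done: \Cref{lem:mixedstate} equates the original EOR worst-case program with the $\inf$-$\sup$ program in which the adversary selects a mixture $\abstrdi[2]\in\simplex(\stspace)$ and the objective is already written in ROE form. Starting from that equality is what makes the rest routine.

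The first step is to apply \Cref{lem:mixedstate} to the left-hand side of \eqref{eqn:mixedbenchmarklemma}, rewriting
\[
\inf_{\deact\in\despace}~\sup_{\stact\in\stspace}\left[\frac{\bmk(\spaceomi[1],\spaceomi[2])}{\algo(\spaceomi[1],\spaceomi[2])}\right]
\;=\;
\inf_{\deact\in\despace}~\sup_{\abstrdi[2]\in\simplex(\stspace)}\left[\frac{\expecta_{\stact\sim\abstrdi[2]}\left[\bmk(\deact,\stact)\right]}{\expecta_{\stact\sim\abstrdi[2]}\left[\algo(\deact,\stact)\right]}\right].
\]
The second step is to apply the standard weak-duality inequality $\inf_{\deact}\sup_{\abstrdi[2]}(\cdot)\geq\sup_{\abstrdi[2]}\inf_{\deact}(\cdot)$ to this right-hand expression; this yields the first inequality of \eqref{eqn:mixedbenchmarklemma}. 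Unlike in \Cref{thm:yaominmax}, we do not need to invoke Yao's Principle itself here — the outer player is already the designer and the inner player already ranges over $\simplex(\stspace)$, so only the elementary max-min inequality is required, which holds for any real-valued bivariate function (and the objective is real-valued and finite under the assumption $\bmk/\algo<\infty$ from equation~\eqref{eqn:costratio} and footnote~\ref{foot:denomgzero}).

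The third and final step is the weakest: for any fixed $\abstrdbari[2]\in\simplex(\stspace)$, the value $\inf_{\deact\in\despace}\bigl[\expecta_{\stact\sim\abstrdbari[2]}[\bmk]/\expecta_{\stact\sim\abstrdbari[2]}[\algo]\bigr]$ is a particular point in the set over which the outer $\sup_{\abstrdi[2]\in\simplex(\stspace)}$ is taken, hence no larger than its supremum. This gives the second inequality of \eqref{eqn:mixedbenchmarklemma}.

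I do not expect a substantive obstacle, because the conceptual content is entirely absorbed into \Cref{lem:mixedstate}. The only subtlety worth flagging in the write-up is confirming that weak duality applies to the ROE objective $f(\deact,\abstrdi[2])=\expecta_{\stact\sim\abstrdi[2]}[\bmk]/\expecta_{\stact\sim\abstrdi[2]}[\algo]$, which requires noting that the denominator is bounded away from zero on the relevant part of $\despace$ (guaranteed by the footnote restricting $\algo$), so $f$ is a well-defined real-valued function on $\despace\times\simplex(\stspace)$ and $\inf\sup\geq\sup\inf$ applies without qualification.
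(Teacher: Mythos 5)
Your proof is correct, and its skeleton matches the paper's: invoke \Cref{lem:mixedstate} to pass to the ROE objective over $\simplex(\stspace)$, swap the order of optimization, then fix a particular $\abstrdbari[2]$. The one genuine difference is in how the swap is justified. The paper formally applies Yao's Minimax Principle (\Cref{thm:yaominmax}) treating $\simplex(\stspace)$ as the adversary's pure-action space, which forces an intermediate line in which the adversary plays $\abstrd\in\simplex(\simplex(\stspace))$ and the objective is an expectation of ROE values; the paper then restricts $\abstrd$ to pointmasses to collapse back down to $\sup_{\abstrdi[2]\in\simplex(\stspace)}\inf_{\deact\in\despace}$. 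You instead observe that the bare max--min (weak duality) inequality $\inf_{\deact}\sup_{\abstrdi[2]} f \geq \sup_{\abstrdi[2]}\inf_{\deact} f$ applies directly to the real-valued function $f(\deact,\abstrdi[2])$ given by the ROE, with no need to randomize over $\simplex(\stspace)$ at all. This is a legitimately more elementary route: it avoids the lift-and-restrict maneuver entirely, and your flag about the denominator (guaranteed positive for some $\deact$ by footnote~\ref{foot:denomgzero}, so the program value is finite) is the only hypothesis weak duality could conceivably need. What the paper's detour buys is the explicit intermediate quantity $\sup_{\abstrd\in\simplex(\simplex(\stspace))}\inf_{\deact}\expecta_{\abstrdi[2]\sim\abstrd}[\cdot]$, which is a potentially larger lower bound than the sup over pointmasses and keeps the derivation in the exact syntactic form of \Cref{thm:yaominmax}; since the lemma statement records only the pointmass version, nothing is lost by your shortcut.
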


\begin{proof}
\noindent We start with the ratio-cost worst-case minimization problem.  Explanations for each step of this sequence are given following.

\begin{align*}
\inf_{\deact\in\despace}~\sup_{\stact\in\stspace}\left[\frac{\bmk(\spaceomi[1],\spaceomi[2])}{\algo(\spaceomi[1],\spaceomi[2])}\right]
&=\inf_{\deact\in\despace}~\sup_{\abstrdi[2]\in\simplex(\stspace)} \left[\frac{\expecta_{\stact\sim\abstrdi[2]}\left[\bmk(\deact,\stact)\right]}{\expecta_{\stact\sim\abstrdi[2]}\left[\algo(\deact,\stact)\right]}\right]\\
&\geq \sup_{\abstrd\in\simplex(\simplex(\stspace))}~\inf_{\deact\in\despace}\left[\expecta_{\abstrdi[2]\sim \abstrd}\left[\frac{\expecta_{\stact\sim\abstrdi[2]}\left[\bmk(\deact,\stact)\right]}{\expecta_{\stact\sim\abstrdi[2]}\left[\algo(\deact,\stact)\right]}\right]\right]\\
&\geq \sup_{\abstrdi[2]\in\simplex(\stspace)}~\inf_{\deact\in\despace}\left[\frac{\expecta_{\stact\sim\abstrdi[2]}\left[\bmk(\deact,\stact)\right]}{\expecta_{\stact\sim\abstrdi[2]}\left[\algo(\deact,\stact)\right]}\right]\\
&\geq \inf_{\deact\in\despace}\left[\frac{\expecta_{\stact\sim\abstrdbari[2]}\left[\bmk(\deact,\stact)\right]}{\expecta_{\stact\sim\abstrdbari[2]}\left[\algo(\deact,\stact)\right]}\right],~~\forall~\text{fixed}~\abstrdbari[2]\in\simplex(\stspace)
\end{align*}

\begin{itemize}
    \item The first line holds as the exact statement of \Cref{lem:mixedstate}.
    \item The second line applies Yao's Minimax Principle (\Cref{thm:yaominmax}).  Note, the adversary's choice of actions $\abstrd\in\simplex(\simplex(\stspace))$ represents the exact transformation using Minimax: the adversary now acts first and plays a distribution over actions in $\simplex(\stspace)$ from the initial $\inf$-$\sup$ side.
    \item The third line holds by restricting the choice of $\abstrd\in\simplex(\simplex(\stspace))$ to its sub-class of pointmass distributions on each fixed $\abstrdbari[2]\in\simplex(\stspace)$ (and then deleting the redundant-structure to describe a fixed $\abstrdbari[2]$ as ``a pointmass on $\abstrdbari[2]$").  Restricting this choice of $\abstrd$ can only impair the new $\sup$-objective in the third line.
    \item The last line holds because fixing an argument of the outer $\sup$-program can only possibly impair its maximization objective (in this case by fixing $\abstrdi[2]=\abstrdbari[2]$ for any $\abstrdbari[2]\in\simplex(\stspace)$.\qedhere
\end{itemize}
\end{proof}

\noindent Note, the statements of \Cref{lem:mixedstate} and \Cref{lem:mixedbenchmark} hold for the most-general Minimax setting using $\inf$-$\sup$ (which does not require, e.g., that spaces be compact).  The weaker $\min$-$\max$ setting -- ``weaker" in the sense that it imposes stronger requirements to guarantee existence of the program's value -- remains interesting, given its simpler description and analytical requirements (e.g., it obviates in-the-limit).

To invoke the $\min$-$\max$ setting, we need only restrict attention to cost minimization problems {\em for which this value exists}, and frequently, this can be proved by guess-and-check (i.e., assume it is true, then solve for the value which existence proves the assumption is correct).

For the reduced $\min$-$\max$ setting, we re-prove the ROE lower-bounds main-result of this paper (i.e., analogous to \Cref{lem:mixedbenchmark}) because we believe its simpler proof is informative.  We give this statement and result in \Cref{a:mainresultminmax}.

\bibliographystyle{apalike}
\bibliography{bib}

\begin{appendix}

\section{Re-proof of the Main Result in a Weaker Setting}
\label{a:mainresultminmax}

\begin{lemma}[The Weak Ratio-of-Expectations Minimax Lemma]
\label{lem:mixedbenchmarkweak}
Consider a worst-case minimization problem setting with design space $\despace$ and state space $\stspace$.  Let $\abstrdbari[2]$ be any fixed mixture over states. Further, assume: {\em the problem setting is known to have a realized value under $\min-\max$ analysis} (i.e., it does not require generalization to $\inf$-$\sup$).  Then:

\begin{align*}
\label{eqn:mixedbenchmarklemmaminmax}
\min_{\deact\in\despace}~\max_{\stact\in\stspace}\left[\frac{\bmk(\spaceomi[1],\spaceomi[2])}{\algo(\spaceomi[1],\spaceomi[2])}\right]
&=\min_{\deact\in\despace}~\max_{\abstrdi[2]\in\simplex(\stspace)}\left[\expecta_{\stact\sim\abstrdi[2]}\left[\frac{\bmk(\spaceomi[1],\spaceomi[2])}{\algo(\spaceomi[1],\spaceomi[2])}\right]\right]\\
&=\min_{\deact\in\despace}~\max_{\abstrdi[2]\in\simplex(\stspace)}\left[\frac{\expecta_{\stact\sim\abstrdi[2]}\left[\bmk(\deact,\stact)\right]}{\expecta_{\stact\sim\abstrdi[2]}\left[\algo(\deact,\stact)\right]}\right]\\
&\geq \max_{\abstrd\in\simplex(\simplex(\stspace))}~\min_{\deact\in\despace}\left[\expecta_{\abstrdi[2]\sim\abstrd} \left[\frac{\expecta_{\stact\sim\abstrdi[2]}\left[\bmk(\deact,\stact)\right]}{\expecta_{\stact\sim\abstrdi[2]}\left[\algo(\deact,\stact)\right]}\right]\right]\\
&\geq \max_{\abstrdi[2]\in\simplex(\stspace)}~\min_{\deact\in\despace}\left[\frac{\expecta_{\stact\sim\abstrdi[2]}\left[\bmk(\deact,\stact)\right]}{\expecta_{\stact\sim\abstrdi[2]}\left[\algo(\deact,\stact)\right]}\right]\\
&\geq \min_{\deact\in\despace}\left[\frac{\expecta_{\stact\sim\abstrdbari[2]}\left[\bmk(\deact,\stact)\right]}{\expecta_{\stact\sim\abstrdbari[2]}\left[\algo(\deact,\stact)\right]}\right],~~\forall~\text{\em fixed}~\abstrdbari[2]\in\simplex(\stspace)
\end{align*}

\noindent Explicitly: the original program's ratio objective, the EOR objective, and the ROE objective {\em all have equal value} (in the first two lines).
\end{lemma}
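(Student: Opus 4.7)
The plan is to establish a five-link chain: two equalities between the pure-ratio, EOR, and ROE inner-max formulations (the first two lines), followed by the three lower-bound inequalities arising from Yao's Minimax and successive relaxations (lines three through five). The hypothesis that the problem admits a $\min$-$\max$ value is what lets me upgrade the $\inf$/$\sup$ statements of \Cref{lem:mixedstate} and \Cref{thm:yaominmax} to $\min$/$\max$ throughout and, crucially, pin the inner-max formulations to a common value rather than only relating them by $\geq$.

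First I would prove the initial equality (pure-ratio equals EOR). For any fixed $\deact$, the inner adversary problem has a scalar integrand $\stact\mapsto \bmk(\deact,\stact)/\algo(\deact,\stact)$; every distribution $\abstrdi[2]\in\simplex(\stspace)$ yields an expectation bounded above by the pointwise maximum of that scalar, while pointmass distributions realize each pure state. Hence the two inner max problems coincide. The second equality (EOR equals ROE) then follows by chaining this identity with \Cref{lem:mixedstate}, which already supplies the identity pure-ratio equals ROE and which specializes to $\min$/$\max$ under the existence hypothesis.

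For the remaining three inequalities I would proceed exactly in the style of the proof of \Cref{lem:mixedbenchmark}. Line three applies \Cref{thm:yaominmax} to the ROE program of line two, treating $(\deact,\abstrdi[2]) \mapsto \expecta_{\stact\sim\abstrdi[2]}[\bmk(\deact,\stact)]/\expecta_{\stact\sim\abstrdi[2]}[\algo(\deact,\stact)]$ as the cost function with the adversary's action space being $\simplex(\stspace)$, so the resulting outer randomization ranges over $\simplex(\simplex(\stspace))$. Line four restricts the outer $\max$ to the sub-class of pointmass distributions over $\simplex(\stspace)$ (collapsing the outer expectation), and line five then fixes a single $\abstrdbari[2]$; each restriction can only weaken the outer maximization.

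The main (modest) obstacle is verifying the pure-ratio-to-EOR equality. In general, randomizing the adversary is a strict relaxation of pure actions, but for \emph{inner} maximization the adversary moves second, so mixing cannot exceed the pointwise maximum it has already attained; this is precisely the feature of ratio objectives under inner $\max$ that does not propagate to the $\inf$/$\sup$ setting of \Cref{lem:mixedbenchmark}, where only the $\geq$ inequalities are available. Under the $\min$/$\max$ existence assumption this argument is clean and yields the stronger three-way equality advertised in the statement.
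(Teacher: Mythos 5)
Your proof is correct, and lines three through five match the paper's proof exactly (Yao's Minimax applied to the ROE program, restriction to pointmasses in $\simplex(\simplex(\stspace))$, then fixing one $\abstrdbari[2]$), as does your justification of the first equality (an expectation of a scalar cannot exceed its pointwise maximum, and pointmasses realize every pure state). Where you genuinely diverge is the second equality, EOR equals ROE. You obtain it by transitivity: pure-ratio equals EOR (your first step) and pure-ratio equals ROE (imported from \Cref{lem:mixedstate}, specialized to $\min$-$\max$), hence EOR equals ROE. The paper instead proves this equality directly and self-containedly: at an optimal proper mixture the adversary can only place positive measure on states that all attain the same optimal ratio, and when every ratio in a convex combination is identical, expectation-over-ratios and ratio-of-expectations coincide as an algebraic identity. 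The trade-off is real. Your route is shorter and avoids any new argument, but it leans on \Cref{lem:mixedstate} and therefore on \Cref{lemma:roedominance}, which is precisely the machinery the appendix re-proof is meant to show can be dispensed with under the stronger $\min$-$\max$ hypothesis; the paper's direct argument is the advertised payoff of that section (the constant-ratio property at the optimum), and it is the step the paper explicitly flags as \emph{not} going through in the general $\inf$-$\sup$ setting. So your proof is valid but forfeits the independent, more elementary character that motivates restating the result in the weaker setting.
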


\begin{proof}
We start with the ratio-cost worst-case minimization problem.  Explanations for each step of this sequence are given following.  (Note, the outline of the proof is similar to the proofs of \Cref{lem:mixedstate} and \Cref{lem:mixedbenchmark}; where explanations are the same as in those proofs, we may reduce detail here.)

\begin{itemize}
    \item The first line above holds with equality because the adversary already ``acts" second such that relaxing to allow distributions over elements of $\scF$ gives no extra advantage to the adversary (cf., the proof of \Cref{lem:mixedbenchmark} for the general case).\label{page:blendsconvexfractions}
    \item The second line changes expecation-over-ratio (EOR) to ratio-over-expectations (ROE), still within the original $\min$-$\max$ structure.  It holds with equality because if we do describe the adversary's optimal action with a proper mixture (not a point mass), then to satisfy optimality, the adversary must only put positive measure on states that all achieve the same optimal ratio, such that EOR is equal to ROE.\footnote{\label{foot:constratioequals} When all ratios are the same -- albeit with arbitrary scale -- then EOR equal to ROE holds as an identity; e.g., $\frac{1}{2}\cdot\frac{4}{10}+\frac{1}{2}\cdot\frac{2}{5}=\frac{2}{5} = \frac{\sfrac{1}{2}\cdot4+\sfrac{1}{2}\cdot2}{\sfrac{1}{2}\cdot10+\sfrac{1}{2}\cdot5}$.}

    (Note: it is this second step that does not go through for the general case in \Cref{lem:mixedstate} and \Cref{lem:mixedbenchmark}, requiring it to identify and apply its distinct technical manipulation for its proof.)
    \item The third, fourth, and fifth lines have the same justification as the final lines of \Cref{lem:mixedbenchmark}:
    \begin{itemize}
        \item the third lines Yao's Minimax Principle (\Cref{thm:yaominmax});
        \item the fourth lines restricts the choice of $\abstrd$ to pointmass distributions (for each fixed $\abstrdbari[2]$);
        \item and the last line fixes any given $\abstrdbari[2]\in\simplex(\stspace)$ (from the class of pointmass distributions in the previous point).\qedhere
    \end{itemize}
\end{itemize}
\end{proof}


\section{Inequality Properties of Re-weighted Fractions}
\label{a:monopropsoffractions}

\noindent \Cref{lemma:roedominance} supports \Cref{lem:mixedstate} in \Cref{s:maintheorem}.  It states that for a ROE objective like the left-hand side of equation~\eqref{eqn:mixedstatelemma} (which structure is our intended simplification), there must exist a single element of any state-space mixture that achieves at least the (inner-maximization) value of the overall ROE.  This statement is a corollary of a standard statement from the probabilistic method:

\begin{fact}
    \label{fact:elementversusexpect}
    Given an arbitrary distribution $\xi$ over $\reals$.  There exists $x$ in the positive-density support $\xi$ such that $x\geq\expecta_{\hat{x}\sim\xi}$.
\end{fact}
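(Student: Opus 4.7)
The plan is a one-line application of the probabilistic method. Set $\mu := \expecta_{\hat{x}\sim\xi}[\hat{x}]$ and suppose, for contradiction, that every point $x$ in the positive-density support of $\xi$ satisfies the strict inequality $x < \mu$. Since the support carries the full unit mass of $\xi$, integrating $\hat{x} < \mu$ against $\xi$ yields $\mu = \expecta_{\hat{x}\sim\xi}[\hat{x}] < \mu$, which is impossible. Hence some point of the support must attain value at least $\mu$, proving the claim.

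The only modeling subtlety is to pin down what ``positive-density support'' means, because the paper explicitly permits distributions with Dirac point masses (see the footnote on point masses in \Cref{s:setup}). I would adopt the standard measure-theoretic convention that the support is the smallest closed set $S \subseteq \reals$ with $\xi(S) = 1$, or equivalently the set of $x \in \reals$ such that every open neighborhood of $x$ has positive $\xi$-measure. Under this convention the contradiction hypothesis ``every $x$ in the support satisfies $x < \mu$'' is equivalent to $S \subseteq (-\infty, \mu)$; a short dyadic argument using countable additivity then produces some rational $\mu' < \mu$ with $\xi\bigl((-\infty, \mu']\bigr) = 1$, from which $\expecta_{\hat{x}\sim\xi}[\hat{x}] \leq \mu' < \mu$ follows immediately, recovering the contradiction without circularly using the strict inequality inside the integral.

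Overall I do not expect any step to be non-routine, since this is the textbook ``some outcome is at least the mean'' observation that underlies the probabilistic method. The only place to be careful is in choosing a definition of support that gracefully handles atoms, as flagged above; once that definition is fixed, the argument reduces to two sentences of proof by contradiction, and nothing in the discrete, continuous, or mixed case requires separate treatment.
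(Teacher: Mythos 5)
Your proof is correct and matches the paper's (implicit) treatment: the paper states this as a standard probabilistic-method fact and supplies no proof of its own, and your first paragraph's averaging argument --- if every supported $x$ were strictly below the mean $\mu$, then integrating $\mu - \hat{x} > 0$ against $\xi$ would force $\mu < \mu$ --- is exactly that standard argument and already constitutes a complete proof. The only quibble is in your second paragraph: the existence of a single $\mu' < \mu$ with $\xi\bigl((-\infty,\mu']\bigr) = 1$ follows from the closedness of the support (which forces $\sup S < \mu$ once $S \subseteq (-\infty,\mu)$), not from countable additivity alone, which only yields $\xi\bigl((-\infty,\mu - \sfrac{1}{n}]\bigr) \to 1$.
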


\begin{lemma}
\label{lemma:roedominance}
Consider a domain $\Omega$ and two positive functions $s:\Omega\rightarrow\reals_+$ and $t:\Omega\rightarrow\reals_+$.  For every distribution $\abstrd\in\simplex(\Omega)$, there exists $\omega_+$ in the positive-density support of $\abstrd$ for which:
\begin{equation}
    \label{eqn:roedominance}
    \frac{s(\omega_+)}{t(\omega_+)} \geq \frac{\expecta_{\omega\sim\abstrd}\left[ s(\omega)\right]}{\expecta_{\omega\sim\abstrd}\left[ t(\omega)\right]}
\end{equation}
\end{lemma}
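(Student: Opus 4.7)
The plan is to prove this via the standard probabilistic-method maneuver of reweighting by $t$, or equivalently by a direct contradiction argument. Let $R \vcentcolon= \expecta_{\omega\sim\abstrd}[s(\omega)] / \expecta_{\omega\sim\abstrd}[t(\omega)]$; this is well-defined because $t$ is strictly positive everywhere on $\Omega$, hence $\expecta_{\omega\sim\abstrd}[t(\omega)] > 0$ for any $\abstrd \in \simplex(\Omega)$. The goal is to exhibit an $\omega_+$ in the positive-density support of $\abstrd$ with $s(\omega_+)/t(\omega_+) \geq R$.

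First, I would give the direct contradiction proof as the main argument. Suppose toward contradiction that $s(\omega)/t(\omega) < R$ for every $\omega$ in the positive-density support of $\abstrd$. Multiplying through by $t(\omega) > 0$ yields $s(\omega) < R \cdot t(\omega)$ on that support. Taking expectations with respect to $\abstrd$ (which only integrates over the positive-density support) gives $\expecta_{\omega\sim\abstrd}[s(\omega)] < R \cdot \expecta_{\omega\sim\abstrd}[t(\omega)]$; but the right-hand side equals $\expecta_{\omega\sim\abstrd}[s(\omega)]$ by the definition of $R$, a contradiction.

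As an alternative (or companion) derivation that directly invokes the already-stated \Cref{fact:elementversusexpect}, I would define a reweighted distribution $\xi \in \simplex(\Omega)$ by the Radon--Nikodym-style density $d\xi(\omega) = \frac{t(\omega)}{\expecta_{\omega'\sim\abstrd}[t(\omega')]}\,d\abstrd(\omega)$. A brief calculation gives
\begin{equation*}
\expecta_{\omega\sim\xi}\left[\frac{s(\omega)}{t(\omega)}\right] = \frac{\expecta_{\omega\sim\abstrd}[s(\omega)]}{\expecta_{\omega'\sim\abstrd}[t(\omega')]} = R.
\end{equation*}
Applying \Cref{fact:elementversusexpect} to the induced distribution of the random variable $s(\omega)/t(\omega)$ under $\xi$ yields some $\omega_+$ in the positive-density support of $\xi$ with $s(\omega_+)/t(\omega_+) \geq R$; and since $t(\omega) > 0$, the positive-density supports of $\xi$ and of $\abstrd$ coincide, so $\omega_+$ is in the positive-density support of $\abstrd$ as required.

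There is no real obstacle, and the statement is essentially a one-line application of the probabilistic method; the only subtlety worth flagging is ensuring $\expecta_{\omega\sim\abstrd}[t(\omega)] > 0$ (handled by the standing assumption that $t$ maps into $\reals_+$ with strict positivity) so that $R$ is well-defined and the reweighted measure $\xi$ is a bona fide probability distribution. If needed, I would add one remark that the proof goes through verbatim in the continuous case because the contradiction argument only uses the fact that a strict pointwise inequality on the support integrates to a strict inequality of expectations.
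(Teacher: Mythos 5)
Your main (contradiction) argument is correct and is essentially the paper's proof in contrapositive form: the paper sets $\sigma=\expecta_{\omega\sim\abstrd}[s(\omega)]$, $\tau=\expecta_{\omega\sim\abstrd}[t(\omega)]$, linearizes the claim into the identity $\expecta_{\omega\sim\abstrd}[\tau\cdot s(\omega)-\sigma\cdot t(\omega)]=0$, and invokes \Cref{fact:elementversusexpect} to extract an $\omega_+$ with $\tau\cdot s(\omega_+)-\sigma\cdot t(\omega_+)\geq 0$; assuming the strict pointwise inequality fails everywhere and integrating, as you do, is the same step run backwards. Your second derivation --- tilting $\abstrd$ by the density $t(\omega)/\expecta_{\omega'\sim\abstrd}[t(\omega')]$ so that the ratio-of-expectations becomes an honest expectation of the ratio $s(\omega)/t(\omega)$ under the new measure $\xi$ --- is a genuinely different and arguably more conceptual route: it buys the observation that ROE is itself an EOR under a change of measure, which connects directly to the paper's broader EOR-versus-ROE theme, at the cost of a little measure-theoretic bookkeeping (checking $\xi$ is a probability measure and that its support matches that of $\abstrd$, which your positivity assumption on $t$ handles). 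Either version suffices; no gaps.
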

\begin{proof}
Set $\sigma \vcentcolon= \expecta_{\omega\sim\abstrd}\left[ s(\omega)\right]$ and $\tau \vcentcolon= \expecta_{\omega\sim\abstrd}\left[ t(\omega)\right]$.  The first line (next) uses these definitions and the second line is a simple re-organization:
\begin{align*}
    \frac{\sigma}{\tau} &= \frac{\expecta_{\omega\sim\abstrd}\left[ s(\omega)\right]}{\expecta_{\omega\sim\abstrd}\left[ t(\omega)\right]} \\
    0 &= \expecta_{\omega\sim\abstrd}\left[\tau\cdot s(\omega)-\sigma\cdot t(\omega) \right]
\end{align*}
\noindent 
Applying \Cref{fact:elementversusexpect} to the last line, there must exist $\omega_+$ for which $\tau\cdot s(\omega_+)-\sigma\cdot t(\omega_+)\geq 0$ which is equivalent to $\sfrac{s(\omega_+)}{t(\omega_+)}\geq \sfrac{\sigma}{\tau}$.
\end{proof}

\end{appendix}

\end{document}